\newcommand{\rimax}{r_{i\max}}
\newcommand{\aimax}{a_{i\max}}
\newcommand{\uimax}{u_{i\max}}
\newcommand{\uimin}{u_{i\min}}
\newcommand{\dijmin}{d_{ij\min}}
\theoremstyle{remark}
\newtheorem{theorem}{Theorem}
\newtheorem{remark}{Remark}
\newtheorem{definition}{Definition}
\newtheorem{proposition}{Proposition}
\DeclareMathOperator{\atan}{atan2}
\begin{document}
%
% paper title
% Titles are generally capitalized except for words such as a, an, and, as,
% at, but, by, for, in, nor, of, on, or, the, to and up, which are usually
% not capitalized unless they are the first or last word of the title.
% Linebreaks \\ can be used within to get better formatting as desired.
% Do not put math or special symbols in the title.
\title{Hybrid Control Barrier Functions for \\Nonholonomic Multi-Agent Systems }

% author names and affiliations
% use a multiple column layout for up to three different
% affiliations
%\author{\IEEEauthorblockN{Michael Shell}
%\IEEEauthorblockA{School of Electrical and\\Computer Engineering\\
%Georgia Institute of Technology\\
%Atlanta, Georgia 30332--0250\\
%Email: http://www.michaelshell.org/contact.html}
%\and
%\IEEEauthorblockN{Homer Simpson}
%\IEEEauthorblockA{Twentieth Century Fox\\
%Springfield, USA\\
%Email: homer@thesimpsons.com}
%\and
%\IEEEauthorblockN{James Kirk\\ and Montgomery Scott}
%\IEEEauthorblockA{Starfleet Academy\\
%San Francisco, California 96678--2391\\
%Telephone: (800) 555--1212\\
%Fax: (888) 555--1212}}

% conference papers do not typically use \thanks and this command
% is locked out in conference mode. If really needed, such as for
% the acknowledgment of grants, issue a 
\author{Aurora Haraldsen, Josef Matou\v{s}, and Kristin Y. Pettersen
\thanks{*This result is part of a project that has received funding from the European Research Council (ERC) under the European Union’s Horizon 2020 research and innovation programme, through the ERC Advanced Grant 101017697-CRÈME. }% <-this % stops a space
\thanks{Aurora Haraldsen, Josef Matou\v{s}, and Kristin Y. Pettersen are with the Department of Engineering Cybernetics, Norwegian University of Science and Technology, N-7491 Trondheim, Norway.  {\{\tt\small aurora.haraldsen, josef.matous, kristin.y.pettersen\}@ntnu.no}}%
}

% after \documentclass

% for over three affiliations, or if they all won't fit within the width
% of the page, use this alternative format:
% 
%\author{\IEEEauthorblockN{Michael Shell\IEEEauthorrefmark{1},
%Homer Simpson\IEEEauthorrefmark{2},
%James Kirk\IEEEauthorrefmark{3}, 
%Montgomery Scott\IEEEauthorrefmark{3} and
%Eldon Tyrell\IEEEauthorrefmark{4}}
%\IEEEauthorblockA{\IEEEauthorrefmark{1}School of Electrical and Computer Engineering\\
%Georgia Institute of Technology,
%Atlanta, Georgia 30332--0250\\ Email: see http://www.michaelshell.org/contact.html}
%\IEEEauthorblockA{\IEEEauthorrefmark{2}Twentieth Century Fox, Springfield, USA\\
%Email: homer@thesimpsons.com}
%\IEEEauthorblockA{\IEEEauthorrefmark{3}Starfleet Academy, San Francisco, California 96678-2391\\
%Telephone: (800) 555--1212, Fax: (888) 555--1212}
%\IEEEauthorblockA{\IEEEauthorrefmark{4}Tyrell Inc., 123 Replicant Street, Los Angeles, California 90210--4321}}

% use for special paper notices
%\IEEEspecialpapernotice{(Invited Paper)}

% make the title area
\maketitle

% As a general rule, do not put math, special symbols or citations
% in the abstract
\begin{abstract}
This paper addresses the problem of guaranteeing safety of multiple coordinated agents moving in dynamic environments. It has recently been shown that this problem can be efficiently  solved through the notion of Control Barrier Functions~(CBFs). However, for nonholonomic vehicles that are required to keep positive speeds, existing CBFs lose their validity. To overcome this limitation, we propose a hybrid formulation based on synergistic CBFs~(SCBFs), which leverages a discrete switching mechanism to avoid configurations that would render the CBF invalid. Unlike existing approaches, our method ensures safety in the presence of moving obstacles and inter-agent interactions while respecting nonzero speed restrictions.  We formally analyze the feasibility of the constraints with respect to actuation limits, and the efficacy of the solution is demonstrated in simulation of a multi-agent coordination problem in the presence of moving obstacles.
\end{abstract}

% no keywords

% For peer review papers, you can put extra information on the cover
% page as needed:
% \ifCLASSOPTIONpeerreview
% \begin{center} \bfseries EDICS Category: 3-BBND \end{center}
% \fi
%
% For peerreview papers, this IEEEtran command inserts a page break and
% creates the second title. It will be ignored for other modes.
\IEEEpeerreviewmaketitle

\section{Introduction}
The problem of coordinating multiple agents has been subject to considerable attention. One crucial aspect is ensuring that the agents do not collide, although the majority of existing work does not take this into consideration~\cite{BORRMANN201568}. There are mainly two ways of tackling this challenge. The first is to directly handle collisions within the coordination scheme, and the second is to handle collisions through a separate "backup" controller. An example of the former is for instance the hierarchical null-space-based behavioral (NSB) algorithm~\cite{NSB}, where collision avoidance is typically given the highest priority among a set of chosen tasks. However, such a tight integration significantly increases the complexity of the coordination design. For instance, in the NSB algorithm one needs to explicitly decide when to activate the collision avoidance task, and this task, upon activation, will prevent the fulfillment of any other interfering tasks. Such a setup can lead to poor performance, especially in dense environments, as designing such logic is nontrivial.

The structure proposed in this paper falls into the latter category, which has the benefit that the coordination scheme is undisturbed unless the nominal actions become unsafe, e.g., because of an obstacle.  Moreover, this separation leads to a modular control design, enabling seamless switching between different coordination protocols. The proposed solution, which can be implemented either in a centralized or decentralized fashion (we will focus on the centralized case), involves solving an optimization problem where collision avoidance is encoded as affine constraints on the input, i.e. a quadratic program~(QP). The safety constraints are derived through the newly popularized Control Barrier Functions~(CBFs), which have the desirable property of being minimally restrictive with respect to the nominal coordination objective.

The use of CBFs in multi-agent systems is not new. Notably, similar problems have been solved using CBFs in \cite{BORRMANN201568,Wang17,Wang2017,matous21,col_liv_cbf,tan2022distributed}, but for holonomic vehicles. The formulation of (valid) CBFs in the presence of nonholonomic constraints is, in fact, far less trivial because it typically  results in an undefined relative degree for certain (usually symmetric) configurations.  Some remedies have been suggested for such systems. One common approach, see e.g. \cite{robotarium,Glotfelter19}, is to map single-integrator dynamics to (nonholonomic) unicycle dynamics  using a near-identity diffeomorphism~\cite{saber}. Others have suggested to formulate the safety constraint more unconventionally on the velocities of the system~\cite{goswami2024collision} or add a heading-dependent term~\cite{molnar2021model} to the constraint.  However, these approaches are limited to vehicles that can reverse to avoid collisions, such as differential-drive robots.  In contrast, the solution proposed in this paper can safely be applied to nonholonomic vehicles that are required to maintain positive and even constant speeds. 
 
Building on the work of \cite{marley2021synergistic,Marley24}, the proposed method involves  introducing multiple CBFs such that when the relative degree becomes undefined, there exists another (valid) CBF. Robust behavior is achieved through a hybrid control design using a hysteresis-based switching logic~\cite{sanfelice2021hybrid}.  Whereas \cite{marley2021synergistic,Marley24} only considered the presence of a static obstacle, we propose an extension to the dynamic setting comprising both inter-agent collisions as well as agent-obstacle collisions. Safety guarantees are rigorously established, and we additionally analyze the feasibility of the constraints under explicit input bounds. The approach is integrated with a nominal coordination strategy, a variation of the NSB algorithm proposed in \cite{matous24}, to demonstrate its application in a highly practical scenario. We showcase the efficacy of the method in simulations of multiple nonholonomic agents moving in dynamic environments, supporting the theoretical results.

The paper is organized as follows. Preliminary concepts are established in Section~\ref{sec:prel}. Section~\ref{sec:problem} defines the control problem. Section~\ref{sec:control} outlines the proposed control method and proves the safety guarantees thereof. Section~\ref{sec:sim} demonstrates the application of the control method through numerical  simulations, and Section~\ref{sec:conclusion} gives the concluding remarks.

\section{Preliminaries}\label{sec:prel}
This section introduces the relevant notation. We also describe the use of control barrier functions (CBFs) for guaranteeing safety of continuous-time systems and the modeling framework adopted for representing systems that can evolve both in continuous and discrete time, known as hybrid systems.
\subsection{Notation} Vectors are generally denoted as $\boldsymbol{x}$, matrices as $X$, scalars as $x$, and sets as $\mathscr{X}$. The $n$-dimensional Euclidean space is denoted as $\mathbb{R}^n$, $\mathbb{R}_{>0}$ is the set of positive real numbers, and $\mathbb{R}_{\geq 0}$ is the set of nonnegative real numbers. $\mathbb{Z}$ is the set of integers and $\mathbb{Z}_{>0}$ and $\mathbb{Z}_{\geq0}$ are the sets of positive and nonnegative integers, respectively. The Euclidean norm is represented by $\Vert \! \cdot \! \Vert$ and the $1$-norm by $|\! \cdot \!|$. The operator $\preccurlyeq$ represents the componentwise inequality.  $R: \mathbb{R} \mapsto \mathrm{SO}(2)$ denotes the rotation matrix corresponding to the counterclockwise rotation about the origin of the 2-dimensional Euclidean space, and $S \triangleq  R(\frac{\pi}{2})$. The Lie derivative of a scalar function $h$ in the direction of the vector field $f$ is denoted as $L_{f}h$. For a set $\mathscr{S}$, $\partial \mathscr{S}$ is the boundary and $\mathrm{int}(\mathscr{S})$ is the interior. A function $\alpha:\mathbb{R}\mapsto \mathbb{R}$ is an extended class-$\mathscr{K}$ function if it is strictly increasing and $\alpha(0) = 0$, denoted by $\alpha \in \mathscr{K}_e$. 

\subsection{Control Barrier Functions}
A comprehensive overview of CBFs and their application to dynamical systems can be found in \cite{ames19}. The following theory is included due to its relevance for the upcoming sections. Consider the (continuous-time) affine control system
\begin{equation}\label{eq:csys}
    \dot{\boldsymbol{x}}=f(\boldsymbol{x})+g(\boldsymbol{x})\boldsymbol{u}, \quad \boldsymbol{x}\in \mathscr{X} \subset \mathbb{R}^{n_x},\quad  \boldsymbol{u}\in \mathscr{U}\subset \mathbb{R}^{n_u},
\end{equation}
where $\boldsymbol{x},\boldsymbol{u}$ are the state and input vectors, and $f : \mathscr{X}\mapsto \mathbb{R}^{n_x}$ and $g : \mathscr{X}\mapsto \mathbb{R}^{n_x\times n_u}$ are locally Lipschitz continuous. 
\begin{definition}[Forward Control Invariance~\cite{harms2025safequadrotornavigationusing}]\label{def:invariance}
    A set $\mathscr{S}$ is forward control invariant for \eqref{eq:csys} if, for every $\boldsymbol{x}\in \mathscr{S}$, there exists at least one input trajectory $\boldsymbol{u}(t) \in \mathscr{U}$ such that $\boldsymbol{x}(t) \in \mathscr{S}$ for all $t\geq t_0$. 
\end{definition}
Definition~\ref{def:invariance}  characterizes the property of set forward invariance under the influence of a control input. Suppose that $\boldsymbol{x} \in \mathscr{S}$ is analogous to safety. Then, forward (control) invariance of $\mathscr{S}$ is desirable.  More generally, we seek a set of safeguarding controls $\mathscr{U}_{\mathrm{safe}} \subset \mathscr{U}$ that evolve the system inside $\mathscr{S}$. This motivates the definition of CBFs.
\begin{definition}[Control Barrier Function~\cite{ames19}]
   Let $\mathscr{S}$ be the $0$-superlevel set of a continuously differentiable function $h:\mathscr{X}\mapsto \mathbb{R}$, then $h$ is a CBF for \eqref{eq:csys} if there exists an extended class-$\mathscr{K}$ function $\alpha$ such that,   for all $\boldsymbol{x} \in \mathscr{S}$,
   \begin{equation}\label{eq:h_dot_alpha_def}
   \underset{\boldsymbol{u} \in \mathscr{U}}{\sup}     \dot{h}(\boldsymbol{x},\boldsymbol{u}) \geq -\alpha(h(\boldsymbol{x})).
   \end{equation}
\end{definition}
Indeed, the constraint \eqref{eq:h_dot_alpha_def} is both sufficient and necessary for invariance of the set $\mathscr{S}$, or \emph{safety}, thus it is minimally restrictive \cite{ames19}. Moreover, for control-affine systems, as in \eqref{eq:csys}, the constraint is additionally linear in the control, as seen from
\begin{equation}
    \dot{h}(\boldsymbol{x},\boldsymbol{u}) \equiv L_{f}h(\boldsymbol{x})+L_{g}h(\boldsymbol{x})\boldsymbol{u}.
\end{equation}
\subsection{Hybrid Systems}
A hybrid dynamical system is modeled as the inclusion~\cite{sanfelice2021hybrid}:
\begin{equation}
    \mathscr{H} = \begin{cases}
        \,\ \dot{\boldsymbol{x}} \in \mathscr{F}(\boldsymbol{x}),&\boldsymbol{x} \in \mathscr{C},\\
        \boldsymbol{x}^+ \!\in \mathscr{G}(\boldsymbol{x}), & \boldsymbol{x} \in \mathscr{D},
    \end{cases}
\end{equation}
representing that the state $\boldsymbol{x}$ can evolve both in continuous time, referred to as flow, and in discrete time, referred to as jumps.  Accordingly, we refer to $\mathscr{F}, \mathscr{G}$ as the flow and jump maps, respectively. Moreover, $\mathscr{C}$ is the flow set and $\mathscr{D}$ is the jump set. Well-posedness and robustness of solutions are guaranteed under the hybrid basic conditions \cite[Assumption~6.5]{sanfelice2021hybrid}. CBFs for hybrid dynamical systems and conditions for forward invariance thereof are established in \cite{MAGHENEM2021109328,Marley24}. These results are omitted due to space considerations.
\section{Problem Description}\label{sec:problem}
We consider a setting comprising $n_a \in \mathbb{Z}_{> 0}$ cooperating (nonholonomic) agents, each associated with an index $i \in \mathscr{I}_a$,  modeled as
    \begin{align}\label{eq:model}
     \begin{bmatrix}
           \dot{x}_i\\
           \dot{y}_i\\
           \dot{\psi}_i\\
           \dot{u}_i
       \end{bmatrix} = \underbrace{ \begin{bmatrix}
         u_i \cos\psi_i\\
         u_i \sin\psi_i\\
         0 \\
         0
       \end{bmatrix}}_{f_i(\boldsymbol{x}_i)}        + \underbrace{\begin{bmatrix}
           0 & 0 \\
           0 & 0 \\
           1 & 0\\
           0 & 1
       \end{bmatrix}}_{g_i(\boldsymbol{x}_i)}\begin{bmatrix}
           r_i\\
           a_i
       \end{bmatrix},
    \end{align}
where $\boldsymbol{p}_i \triangleq [x_i,y_i]^\top$ and $\boldsymbol{v}_i \triangleq \dot{\boldsymbol{p}}_i$ represent the position and velocity of the $i$th agent, respectively, and $u_i \in \mathbb{R}$ and $\psi_i \in (-\pi,\pi]$ are the agent's speed and orientation. Furthermore, $\boldsymbol{x}_{i} \triangleq [\boldsymbol{p}^\top_i,\psi_{i},u_i]^\top\in \mathscr{X}_i$ is the state vector, with $\mathscr{X}_i \triangleq \mathbb{R}^2\times (-\pi,\pi]\times \mathbb{R}$, and $\boldsymbol{u}_i \triangleq [r_i, a_i]^\top \in \mathscr{U}_i\subset \mathbb{R}^2$ is the input vector.  The control inputs comprise the heading rate and acceleration $r_i \in \mathscr{R}_i$ and $a_i \in \mathscr{A}_i$, respectively, and the set of feasible heading rates and accelerations are $\mathscr{R}_i\triangleq \{r_i \in \mathbb{R} :  | r_i| \leq \rimax\}$ and $\mathscr{A}_i\triangleq \{a_i \in \mathbb{R}:| a_i| \leq \aimax \}$, where $\rimax > 0$ and $\aimax \geq 0$ are constant parameters. 

The model \eqref{eq:model} is chosen for its wide applicability to vehicles found on land, sea, and in the air. The agents are assumed to move with strictly positive speeds, i.e., we restrict $u_i$ to the interval $[\uimin, \uimax]$, where $\uimax\geq \uimin > 0$. This restriction is common in real-world systems, for example in fixed-wing aircrafts and marine vehicles during maneuvering. Note that this implies that $u_i \equiv \Vert  \boldsymbol{v}_i\Vert$.  The goal is to coordinate the agents along a desired path \emph{safely}, i.e. without collisions between the agents and external obstacles. These two aspects are described separately below.

\begin{itemize}
    \item \textbf{Formation Path-Following:} This task represents the nominal control objective, and it involves coordinating the fleet of agents along a desired path. To define the task more precisely, we introduce the so-called barycenter of the formation:
    \begin{equation}\label{eq:pb}
        \boldsymbol{p}_b \triangleq \frac{1}{n_a}\sum_{i \in \mathscr{I}_a} \boldsymbol{p}_i.
    \end{equation}
   The goal is to control the vehicles such that their barycenter converges  to the desired path. For simplicity, the path is chosen to coincide with the $x$-axis of the inertial coordinate frame. The path-following task is then
   \begin{equation}\label{eq:obj_path}
      \lim_{t\rightarrow \infty} y_b(t)= 0.
   \end{equation}
    In addition, the vehicles should keep a prescribed formation relative to the barycenter (and, ultimately, the path).  The desired formation is described by $n_a$ position vectors $\boldsymbol{p}_1^f,\dots, \boldsymbol{p}^f_{n_a}$, chosen such that $\sum_{i\in\mathscr{I}_a}\boldsymbol{p}_i^f = \boldsymbol{0}$. Note that, in general, the formation positions should be rotated according to the orientation of the path. This is not necessary in our case, since the path we chose is aligned with the inertial frame. We will formulate this objective using the task variables $ \boldsymbol{\sigma}_{i,f} \triangleq \boldsymbol{p}_i- \boldsymbol{p}_b$,  representing the position of the $i$th vehicle expressed in the path-tangential frame~\cite{matous24}. Defining the corresponding task errors as $\tilde{ \boldsymbol{\sigma}}_{f,i} \triangleq  \boldsymbol{\sigma}_{f,i}- \boldsymbol{\sigma}_{f,i,d}$ with $\boldsymbol{\sigma}_{f,i,d} \triangleq \boldsymbol{p}_i^f$, the goal of the formation-keeping task can be expressed as
    \begin{equation}\label{eq:obj_form}
     \lim_{t\rightarrow \infty} \sum_{i \in \mathscr{I}_a} \left\Vert\tilde{\boldsymbol{ \sigma}}_{f,i} (t)\right\Vert = 0.
    \end{equation}
    \item \textbf{Collision Avoidance:} 
The fleet of vehicles may encounter obstacles along the path that require evasive action. Given $n_o \in \mathbb{Z}_{\geq 0}$ obstacles, we will denote the position and velocity of the $j$th obstacle as $\boldsymbol{p}_j \triangleq [x_j,y_j]^\top$ and $\boldsymbol{v}_j \triangleq \dot{\boldsymbol{p}}_j$, respectively, associated with the indices $j \in \mathscr{I}_o$ such that $\mathscr{I}_o \cap \mathscr{I}_a = \emptyset$. Denote the pairwise distances as $d_{ij}= d_{ji} \triangleq \Vert \boldsymbol{p}_{ij} \Vert, \boldsymbol{p}_{ij}\triangleq \boldsymbol{p}_j-\boldsymbol{p}_i$, and suppose that there is a collision if this distance is reduced below a prescribed limit of $\dijmin > 0$. We thus require that
       \begin{align}\label{eq:obj_obscolav}
       \begin{split}
           \inf_{t\geq 0} d_{ij}(t)\geq \dijmin,\quad\forall i \in \mathscr{I}_a,\quad \forall j \in \mathscr{I}_o.
       \end{split}
    \end{align}
Agents must also maintain a prescribed distance to each other.  Hence, we similarly impose that
    \begin{align}\label{eq:obj_ivcolav}
     \inf_{t\geq 0} d_{ij}(t) \geq \dijmin,\,\, \forall i,j \in \mathscr{I}_a, \,\,j \neq i.
    \end{align}
\end{itemize}

The NSB algorithm \cite{NSB} is used for the nominal control design. Since there are no new contributions directly related to this framework,  the details of it are skipped. Please refer to \cite{matous24} for a rigorous treatment of the method and details on the control design specific to the tasks \eqref{eq:obj_path} and \eqref{eq:obj_form}.

\begin{remark}
   Collision avoidance has traditionally been implemented as high-priority tasks within the NSB framework~\cite{NSB}. This design choice poses several challenges. First, tasks are analogous to equality constraints, whereas collision avoidance is fundamentally an inequality constraint. Although this limitation can be circumvented, another more crucial aspect is that the resulting output velocity does not take into account underlying constraints and may be infeasible~\cite{matous24}. The proposed approach does not suffer from such drawbacks and is straightforward to combine with most coordination algorithms.
\end{remark}

\section{Safety Guarantees  for Nonholonomic Agents using Hybrid Control Barrier Functions}\label{sec:control}
The control structure proposed in this work implements collision avoidance on an external layer on top of the nominal control. We follow a standard min-norm control design using control barrier functions (CBFs) to encode safety. A benefit of this approach is that the controller is independent of the underlying nominal algorithm and can thus be paired with a broad range of multi-agent behaviors. Different from related approaches is that we only require that each agent controls its heading rate~$r_i$ (presuming positive speed), and we do not resort to backup controllers as in \cite{Squires2018}.

We propose to compute the heading rates of the cooperating agents in a centralized manner:
\begin{equation}\label{eq:QP}
    \begin{split}
        & \underset{\boldsymbol{r} \in \mathscr{R}}{\arg \min}    \quad \frac{1}{2} \Vert \boldsymbol{r} - \boldsymbol{r}^d\Vert^2,\\
        &\text{subject to} \quad  G \boldsymbol{r} \preccurlyeq \boldsymbol{b},
    \end{split}
\end{equation}
where $\boldsymbol{r} \triangleq [r_1, r_2, \dots , r_{n_a}]^\top$ and $\boldsymbol{r}^d \triangleq [r^d_1, r^d_2, \dots , r^d_{n_a}]^\top$ are vectors of (desired) control inputs, $\mathscr{R} \triangleq \mathscr{R}_1 \times \mathscr{R}_2 \times \cdots \times \mathscr{R}_{n_a}$, and $ G\in \mathbb{R}^{n_c \times n_a}$ and $\boldsymbol{b} \in \mathbb{R}^{n_c}$, where $n_c = n_an_o +\frac{n_a(n_a-1)}{2}$ represents the number of linear constraints.  The acceleration input is simply set to the desired acceleration given by the nominal controller, i.e. $a_i = a_i^d$ for all $i \in \mathscr{I}_a$.

\begin{remark}
 The controller \eqref{eq:QP} can also be implemented in a decentralized setting, for instance by performing the optimization only for the closest neighboring agents or even by treating all other agents as obstacles, the latter corresponding to a single-agent system.
\end{remark}

This remainder of the section formulates the constraint matrices for the optimization~\eqref{eq:QP}, structured into three parts. Sections~\ref{ssec:obsavoid} and \ref{ssec:agent_avoid} derive the constraints for preventing agent-obstacle and inter-agent collisions, respectively. In Section~\ref{ssec:coop_ob_avoid}, we present an alternative formulation of the former which is particularity relevant to formation control problems. It consists of coordinating the agents such that they keep the formation during evasive maneuvers. This can be beneficial for communication purposes and in other cases where it is necessary to keep a tight formation e.g. whilst collectively transporting an object.

\subsection{Agent-to-Obstacle Constraint}\label{ssec:obsavoid}
To fulfill the pairwise objectives \eqref{eq:obj_obscolav} involving vehicle $i$ and obstacle $j$, we define functions
\begin{equation}\label{eq:h0}
    h_{0}(\boldsymbol{x})= {\dijmin}^2-{d_{ij}}^2.
\end{equation}
For ease of exposition, we will leave out the index terms when it is clear from the context. For instance, we define the stacked vector $\boldsymbol{x} \triangleq [\boldsymbol{x}^\top_i, \boldsymbol{x}^\top_{j}]^\top$ and the set $\mathscr{X}\triangleq \mathscr{X}_i\times \mathscr{X}_{j} \backslash \mathscr{B}_{ij}$, where $\boldsymbol{x}_{j}\in \mathscr{X}_{j} \subset \mathbb{R}^{n_{j}}$ is the state of the obstacle, which evolves through some function $f_j: \mathscr{X}_{j} \mapsto  \mathbb{R}^{n_{j}}$.
Moreover,  $\mathscr{B}_{ij}$ defines the ball $\mathscr{B}_{ij} \triangleq \{\boldsymbol{x}\in \mathscr{X}_i\times\mathscr{X}_j: d_{ij} < \epsilon \}$, where $\epsilon < \dijmin$ is an arbitrarily small positive constant. Note that the usual sign is reversed, i.e. $h_0\leq 0$ for each $i,j$ implies safety, and so $\mathscr{S}_0 \triangleq \left\{\boldsymbol{x}\in \mathscr{X}: h_0(\boldsymbol{x})\leq 0\right\}$. The derivative of a function $h_*$ along the system \eqref{eq:model} will be expressed using Lie derivatives:
\begin{equation}\label{eq:hastdot}
    \dot{h}_* = L_{f}h_* +\left(L_{g}h_*\right)_2a_i+\left(L_{g}h_*\right)_1 r_i,
\end{equation}
such that $\left(L_{g}h_*\right)_m$ represents the $m$th element of the vector $L_{g_i} h_* \in \mathbb{R}^2$ and $L_{f}h_* \triangleq L_{f_{ij}}h_*$, where $f_{ij} \triangleq [f_i^\top, f_j^\top]^\top$ are the stacked vector fields.

Given the higher relative degree, $\left(L_{g}h_0\right)_1(\boldsymbol{x})= 0$ for all $\boldsymbol{x}\in \mathscr{X}$, we introduce the following extension based on~\cite{xiao2021high}:
\begin{equation}\label{eq:h1}
   h_1(\boldsymbol{x})= \dot{h}_0(\boldsymbol{x})+ \alpha_0\left(h_0(\boldsymbol{x})\right), \quad\dot{h}_0(\boldsymbol{x}) = 2\boldsymbol{p}^\top_{ij}\boldsymbol{v}_{ij},
\end{equation}
where $\boldsymbol{v}_{ij} \triangleq \boldsymbol{v}_i-\boldsymbol{v}_{j}$ and $\alpha_0\in \mathscr{K}_e$. The function $h_1$ is a candidate high-order CBF (HOCBF) of order 2. Further,  $h_1(\boldsymbol{x}) \leq 0$ implies $\dot{h}_0(\boldsymbol{x}) \leq -\alpha_0(h(\boldsymbol{x}))$ for all $\boldsymbol{x}\in \mathscr{X}$ and 
\begin{align}\label{eq:Lgh1}
     \left(L_{g}h_1\right)_1(\boldsymbol{x}) =  2 u_i \boldsymbol{p}_{ij}^\top  S\boldsymbol{n}_{i},
\end{align}
where we denote $\boldsymbol{n}_i = \frac{\boldsymbol{v}_{i}}{\Vert \boldsymbol{v}_i\Vert}$ for the unit vector in the direction of $\boldsymbol{v}_i$. However, solving $ \left(L_{g}h_1\right)_1(\boldsymbol{x}) = 0$ shows that the influence of the heading rate vanishes for all $\boldsymbol{x} \in\mathscr{X}$ satisfying $\psi_i = \atan(y_{ij},x_{ij})$, i.e. the relative degree becomes undefined. A similar issue has been addressed in \cite{marley2021synergistic,Marley24} for static obstacles.
\begin{remark}
Note that the term $\left(L_g h_1\right)_2$ is nonzero in the above case. Nevertheless, a hybrid control design is needed for the following reasons. For vehicles that require positive speed, safety cannot be maintained solely through acceleration control, as this would require negative speed~\cite{Haraldsen24ACC}. Even if the vehicle can reverse to avoid a collision, it may enter a deadlock by doing so~\cite{Marley24}. Integration of the vehicle's acceleration in the present strategy is, however, a subject for future work.
\end{remark}
Inspired by \cite{marley2021synergistic,Marley24}, the proposed solution builds on hybrid systems theory, and the idea is similar as for synergistic control using Lyapunov functions  \cite{sanfelice2021hybrid}. It consists of designing a family of CBFs, such that when the relative degree becomes undefined, there exists another CBF with a lower value, as illustrated in Figure~\ref{fig:synergycbf}.

To avoid points with vanishing gradient, we introduce a discrete variable $q \in \mathscr{Q}$ that can take the values $\mathscr{Q} \triangleq \left\{-1,1\right\}\subset \mathbb{Z}$.  The resulting dynamics can be modeled as the hybrid system 
\begin{align}\label{eq:hybrid_sys}
    \mathscr{H} = \begin{cases}
        \quad \! \!\!\dot{\boldsymbol{x}}\in \mathscr{F}(\boldsymbol{x}, q),\quad \,\,\,\, \dot{q}= 0,&(\boldsymbol{x},q) \in \mathscr{C},\\
      q^+ \in \mathscr{G}\,(\boldsymbol{x},q), \quad \boldsymbol{x}^+ = \boldsymbol{x},& (\boldsymbol{x},q) \in \mathscr{D},
    \end{cases}
\end{align}
where the flow and jump sets $\mathscr{C}, \mathscr{D} \subset \mathscr{X}\times \mathscr{Q}$ and the jump map $\mathscr{G}\subset \mathscr{Q}$ will be defined subsequently, and the flow map is
\begin{equation}\label{eq:flow}
    \mathscr{F} = \left\{ \,\begin{bmatrix}
        f_i(\boldsymbol{x}_i) +g_i(\boldsymbol{x}_i)\boldsymbol{u}_i\\
        f_j(\boldsymbol{x}_{j})
    \end{bmatrix} : \boldsymbol{u}_i\in \mathscr{U}_i  \,\right\}.
\end{equation}
Similar to \cite{Marley24}, the logical variable is used to shift the principal rotations by an angle $\vartheta \in \mathbb{R}_{>0}$ in clockwise or counterclockwise direction. We propose the following extension of \eqref{eq:h1}:
\begin{equation}\label{eq:h2}
\begin{split}
    \!\!\!\! h_2(\boldsymbol{x},q) = 2\boldsymbol{p}_{ij}^\top R\left(q \vartheta\right)\boldsymbol{v}_{i}\! -\!  2\boldsymbol{p}_{ij}^\top\boldsymbol{v}_{j}\! +\!  \alpha_0(h_0(\boldsymbol{x})) + \varepsilon(\boldsymbol{x}), 
\end{split}
\end{equation}
where $\varepsilon(\boldsymbol{x}) \triangleq  4 d_{ij} u_i \sin\left(\vartheta\right)$ is a penalty term. The function~\eqref{eq:h2} essentially represents two candidate CBFs  or, equivalently, one candidate synergistic CBF~(SCBF) \cite{marley2021synergistic}. The term $\varepsilon$ is needed to ensure that forward invariance of the set 
\begin{equation}
    \mathscr{S}_2 \triangleq \left\{(\boldsymbol{x},q)\in \mathscr{X} \times \mathscr{Q} : h_2(\boldsymbol{x},q)\leq 0\right\}
\end{equation} 
implies safety, as elaborated next.
\begin{figure}
    \centering
    \includegraphics[width=1\linewidth]{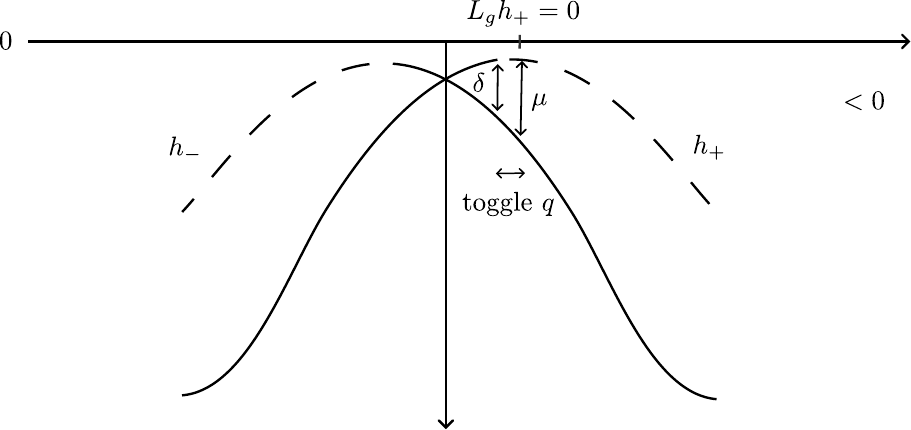}
    \caption{Illustration of a synergistic CBF, comprising two candidate CBFs toggled by a discrete variable. }
    \label{fig:synergycbf}
\end{figure}

\begin{proposition}\label{prop:obstacle_safety}
 For the system \eqref{eq:hybrid_sys}-\eqref{eq:flow}, if $\vartheta < \frac{\pi}{2}$ and $\mathscr{S}_2$ is forward invariant, then $\mathscr{S}_0$ is forward invariant.
\end{proposition}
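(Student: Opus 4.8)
The plan is to chain the implications $\mathscr{S}_2 \text{ forward invariant} \Rightarrow h_1 \leq 0 \Rightarrow \dot{h}_0 \leq -\alpha_0(h_0) \Rightarrow h_0 \leq 0$, after which forward invariance of $\mathscr{S}_0$ follows from the standard HOCBF/comparison argument. First I would note that forward invariance of $\mathscr{S}_2$ means $h_2(\boldsymbol{x}(t),q(t)) \leq 0$ holds along every solution initialized in $\mathscr{S}_2$, and that since $h_0$ depends only on $\boldsymbol{x}$ while the jump map fixes $\boldsymbol{x}$ ($\boldsymbol{x}^+ = \boldsymbol{x}$), the function $h_0$ varies continuously in hybrid time; hence it suffices to control $\dot{h}_0$ during flow.

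The core of the proof is the pointwise implication $h_2(\boldsymbol{x},q) \leq 0 \Rightarrow h_1(\boldsymbol{x}) \leq 0$. Subtracting \eqref{eq:h1} from \eqref{eq:h2} and using $\dot{h}_0 = 2\boldsymbol{p}_{ij}^\top\boldsymbol{v}_{ij}$ gives
\begin{equation}
    h_1(\boldsymbol{x}) = h_2(\boldsymbol{x},q) + 2\boldsymbol{p}_{ij}^\top\left(I - R(q\vartheta)\right)\boldsymbol{v}_i - \varepsilon(\boldsymbol{x}),
\end{equation}
so the implication reduces to showing $2\boldsymbol{p}_{ij}^\top(I - R(q\vartheta))\boldsymbol{v}_i \leq \varepsilon(\boldsymbol{x})$. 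Writing $\boldsymbol{v}_i = u_i[\cos\psi_i,\sin\psi_i]^\top$ and $\boldsymbol{p}_{ij} = d_{ij}[\cos\gamma,\sin\gamma]^\top$, and setting $\beta \triangleq \gamma - \psi_i$, a trigonometric identity yields
\begin{equation}
    2\boldsymbol{p}_{ij}^\top\left(I - R(q\vartheta)\right)\boldsymbol{v}_i = -4 d_{ij} u_i\, q \sin\!\left(\beta - \tfrac{\vartheta}{2}q\right)\sin\!\left(\tfrac{\vartheta}{2}\right) \leq 4 d_{ij} u_i \sin\!\left(\tfrac{\vartheta}{2}\right),
\end{equation}
where the bound uses $q \in \{-1,1\}$ and $|\sin(\cdot)| \leq 1$.

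The decisive point is that $\vartheta < \tfrac{\pi}{2}$ makes the penalty term dominate this error: since $\cos(\tfrac{\vartheta}{2}) \geq \tfrac{1}{2}$ on $(0,\tfrac{2\pi}{3})$, we have $\sin(\tfrac{\vartheta}{2}) \leq 2\sin(\tfrac{\vartheta}{2})\cos(\tfrac{\vartheta}{2}) = \sin\vartheta$, hence $4 d_{ij} u_i \sin(\tfrac{\vartheta}{2}) \leq 4 d_{ij} u_i \sin\vartheta = \varepsilon(\boldsymbol{x})$. This is precisely the reason the penalty is designed as $\varepsilon = 4 d_{ij} u_i \sin\vartheta$, and it establishes $h_1(\boldsymbol{x}) \leq h_2(\boldsymbol{x},q) \leq 0$. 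With $h_1 \leq 0$ holding for all hybrid time, the definition of $h_1$ gives $\dot{h}_0 \leq -\alpha_0(h_0)$ along flows; continuity of $h_0$ across jumps together with the comparison lemma (valid since $\alpha_0 \in \mathscr{K}_e$ with $\alpha_0(0)=0$) then shows that any solution with $h_0(\boldsymbol{x}(0)) \leq 0$ keeps $h_0(\boldsymbol{x}(t)) \leq 0$, i.e.\ remains in $\mathscr{S}_0$.

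The main obstacle I anticipate is the trigonometric estimate: one must verify that the worst-case gradient-shift error, maximized over the relative angle $\beta$ and the switch $q$, is exactly compensated by $\varepsilon$, which is what pins down the admissible range of $\vartheta$. A secondary bookkeeping point is the initialization: forward invariance of $\mathscr{S}_0$ presumes $\boldsymbol{x}(0) \in \mathscr{S}_0$, and the role of $\mathscr{S}_2$-invariance is solely to furnish $h_1 \leq 0$ for all time; I would also confirm that $d_{ij} \geq \epsilon > 0$ and $u_i \geq \uimin > 0$ on $\mathscr{X}$, so that the scalar factor $4 d_{ij} u_i$ is nonnegative throughout.
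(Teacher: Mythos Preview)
Your proof is correct and follows essentially the same route as the paper: you write $h_1 = h_2 + 2\boldsymbol{p}_{ij}^\top(I-R(q\vartheta))\boldsymbol{v}_i - \varepsilon$ and bound the middle term by $\varepsilon$ to obtain $h_1 \leq h_2$, then invoke the standard HOCBF/comparison argument. The only cosmetic difference is that the paper bounds $2\boldsymbol{p}_{ij}^\top(I-R(q\vartheta))\boldsymbol{v}_i$ via the operator norm $\|I-R(q\vartheta)\| \leq 2\sin\vartheta$, whereas you compute the same estimate directly via the sum-to-product identity; both yield $4d_{ij}u_i\sin(\vartheta/2) \leq 4d_{ij}u_i\sin\vartheta$ under $\vartheta < \pi/2$.
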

\begin{proof}
    The proof follows from showing that $ \mathscr{S}_2 \subset \mathscr{S}_1$, with $\mathscr{S}_1 \triangleq\left\{ \boldsymbol{x} \in \mathscr{X} : h_1(\boldsymbol{x})\leq 0\right\}$. Manipulation of \eqref{eq:h1} gives
    \begin{align}
    h_1(\boldsymbol{x}) = h_2(\boldsymbol{x},q)+ 2\boldsymbol{p}_{ij}^\top \Delta(q\vartheta) \boldsymbol{v}_i-\varepsilon(\boldsymbol{x}),
\end{align}
where $ \Delta(q\vartheta) \triangleq  I - R(q\vartheta)$ and $ I$ is the 2-by-2 identity matrix. Furthermore,
\begin{align}\label{eq:Deltabound}
     \boldsymbol{p}_{ij}^\top  \Delta(q\vartheta)\boldsymbol{v}_{i} \leq  d_{ij} u_{i}\Vert  \Delta \Vert \leq 2  d_{ij} u_i\sin\left(\vartheta\right).
\end{align}
It follows that $h_1(\boldsymbol{x})\leq h_2(\boldsymbol{x},q),\forall \boldsymbol{x},q \in \mathscr{X} \times \mathscr{Q}$. This fact implies that $\mathscr{S}_0$ is forward invariant (see e.g. \cite{xiao2021high} for a proof). 
\end{proof}
The strategy is to update $q$ such that we avoid the critical points. The critical set is defined as the set of states where the influence of the heading rate vanishes and the function $h_2$ is nondecreasing:
\begin{equation}
\begin{split}
  \mathscr{Z} \triangleq \left\{ (\boldsymbol{x},q,a_i) \in  \mathscr{X} \times \mathscr{Q} \times \mathscr{A}_i :  L_{f}h_2(\boldsymbol{x},q) + \right.&\\
 \left. \left(L_{g}h_2\right)_2(\boldsymbol{x},q)a_i\geq 0, \,\left(L_g h_2\right)_1(\boldsymbol{x},q)  =0\,\right\}&.
\end{split}
\end{equation}
For each $q = \pm1$, the critical orientations corresponding to $\left(L_g h_2\right)_1(\boldsymbol{x},q) = 0$ are given by
\begin{equation}\label{eq:crit}
 \psi_{i}  = \atan(y_{ij},x_{ij})\mp \vartheta.
\end{equation}
In order for $h_2$ to be a candidate SCBF, the function $h_2$ must strictly decrease during jumps. This property is satisfied if $h_2$ has a nonzero synergy gap~\cite{marley2021synergistic}.  To define this measure, let $m_2(\boldsymbol{x}) \triangleq \underset{q \in \mathscr{Q}}{\min} \,h_2(\boldsymbol{x},q)$. Then, the synergy gap is given by 
\begin{equation}\label{eq:mu}
    \mu =\underset{\boldsymbol{x},q\in \mathscr{Z}}{\inf}\Big[h_2(\boldsymbol{x},q) - m_2(\boldsymbol{x})\Big].
\end{equation}

\begin{proposition}\label{prop:obstacle_SCBF}
    If $\vartheta < \frac{\pi}{2}$, the function \eqref{eq:h2} is a candidate SCBF for \eqref{eq:hybrid_sys}-\eqref{eq:flow}. That is, the synergy gap $\mu$ in \eqref{eq:mu} is nonzero.
\end{proposition}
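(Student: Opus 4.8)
The plan is to exploit the fact that $\mathscr{Q}=\{-1,1\}$ contains only two elements, so that $m_2(\boldsymbol{x})=\min\{h_2(\boldsymbol{x},1),h_2(\boldsymbol{x},-1)\}$ and consequently, for any $(\boldsymbol{x},q)\in\mathscr{Z}$, the synergy-gap integrand collapses to $h_2(\boldsymbol{x},q)-m_2(\boldsymbol{x})=\max\{0,\,h_2(\boldsymbol{x},q)-h_2(\boldsymbol{x},-q)\}$. It therefore suffices to show that $h_2(\boldsymbol{x},q)-h_2(\boldsymbol{x},-q)$ is bounded below by a strictly positive constant on the critical set $\mathscr{Z}$, since this then yields a positive lower bound on the infimum defining $\mu$.

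First I would compute the difference $h_2(\boldsymbol{x},q)-h_2(\boldsymbol{x},-q)$. Inspecting \eqref{eq:h2}, every term except the rotation term $2\boldsymbol{p}_{ij}^\top R(q\vartheta)\boldsymbol{v}_i$ is independent of $q$ and hence cancels. Using that cosine is even and sine is odd, one has $R(q\vartheta)-R(-q\vartheta)=2\sin(q\vartheta)S$, whence
\begin{equation}
    h_2(\boldsymbol{x},q)-h_2(\boldsymbol{x},-q)=4\sin(q\vartheta)\,\boldsymbol{p}_{ij}^\top S\boldsymbol{v}_i.
\end{equation}
Next I would evaluate this quantity on $\mathscr{Z}$. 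There the orientation is pinned by \eqref{eq:crit}, i.e. $\psi_i=\atan(y_{ij},x_{ij})-q\vartheta$, so the angle between $\boldsymbol{p}_{ij}$ and the velocity $\boldsymbol{v}_i=u_i\boldsymbol{n}_i$ equals exactly $q\vartheta$. Writing $\boldsymbol{p}_{ij}=d_{ij}[\cos\phi,\sin\phi]^\top$ with $\phi=\atan(y_{ij},x_{ij})$ and expanding gives $\boldsymbol{p}_{ij}^\top S\boldsymbol{v}_i=d_{ij}u_i\sin(\phi-\psi_i)=d_{ij}u_i\sin(q\vartheta)$, so that
\begin{equation}
    h_2(\boldsymbol{x},q)-h_2(\boldsymbol{x},-q)=4d_{ij}u_i\sin^2(q\vartheta)=4d_{ij}u_i\sin^2(\vartheta),
\end{equation}
using $\sin^2(q\vartheta)=\sin^2(\vartheta)$ for $q=\pm1$. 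This is strictly positive whenever $d_{ij}>0$, $u_i>0$, and $0<\vartheta<\frac{\pi}{2}$. Hence $m_2(\boldsymbol{x})=h_2(\boldsymbol{x},-q)$ and the gap integrand equals precisely $4d_{ij}u_i\sin^2(\vartheta)$ on $\mathscr{Z}$.

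Finally I would take the infimum. Since states inside $\mathscr{B}_{ij}$ are excluded from $\mathscr{X}$, we have $d_{ij}\geq\epsilon>0$, and by the positive-speed restriction $u_i\geq\uimin>0$. Therefore $\mu\geq 4\epsilon\,\uimin\sin^2(\vartheta)>0$, which establishes that $h_2$ is a candidate SCBF. The main obstacle, I expect, is the geometric evaluation on $\mathscr{Z}$: one must recognize that the critical-orientation condition \eqref{eq:crit} fixes the relative angle between $\boldsymbol{p}_{ij}$ and $\boldsymbol{v}_i$ to exactly $q\vartheta$, which is what makes the two sine factors combine into $\sin^2(\vartheta)$ with a definite positive sign, independent of $q$. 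Securing a \emph{uniform} positive lower bound then relies on the a priori bounds $d_{ij}\geq\epsilon$ and $u_i\geq\uimin$, without which the infimum could in principle vanish.
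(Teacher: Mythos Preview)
Your proof is correct and follows essentially the same line as the paper's: both reduce to showing that on the critical set the difference $h_2(\boldsymbol{x},q)-h_2(\boldsymbol{x},-q)$ equals $2d_{ij}u_i\,\boldsymbol{n}_{ij}^\top\bigl(R(q\vartheta)-R(-q\vartheta)\bigr)\boldsymbol{n}_i=4d_{ij}u_i\sin^2(\vartheta)$, and then invoke $d_{ij}\geq\epsilon$ together with positive speed for a uniform lower bound. The paper phrases the key step as $\boldsymbol{n}_{ij}^\top R(\vartheta)\boldsymbol{n}_i=1>\boldsymbol{n}_{ij}^\top R^\top(\vartheta)\boldsymbol{n}_i$ at the critical orientation and only afterward simplifies to $2u_id_{ij}(1-\cos 2\vartheta)$, whereas you go directly through coordinates; the substance is the same.
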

\begin{proof}
Due to symmetry, we only need to show that this holds for the critical point $\mathscr{Z}^+ = \{(\boldsymbol{x},q)\in \mathscr{Z}: q = +1\}$. It is easy to verify that $\underset{q \in \mathscr{Q}}{\arg \min} \, h_2 (\boldsymbol{x},q) = -1$ for all $\boldsymbol{x} \in \mathscr{Z}^+$.  The synergy gap is thus given by
 \begin{equation}\label{eq:mu_proof}
    \mu = \underset{\boldsymbol{x} \in \mathscr{Z}^+}{\inf} 2 u_{i} d_{ij} \boldsymbol{n}_{ij}^\top \left(  R(\vartheta) -  R^\top(\vartheta) \right)\boldsymbol{n}_i,
 \end{equation}
 where  $\boldsymbol{n}_{ij}\triangleq \frac{\boldsymbol{p}_{ij}}{\Vert \boldsymbol{p}_{ij}\Vert}$ is the unit vector in the direction of $\boldsymbol{p}_{ij}$. 
Moreover, since $\boldsymbol{n}_{ij}^\top R(\vartheta) \boldsymbol{n}_{i} = 1$ and $\boldsymbol{n}_{ij}^\top   R^\top(\vartheta) \boldsymbol{n}_{i} < 1$ for all $\boldsymbol{x} \in \mathscr{Z}^+$ by \eqref{eq:crit}, the synergy gap is positive.  
\end{proof}
Through some manipulation of \eqref{eq:mu_proof}, we can express the synergy gap as
\begin{equation}
    \mu = \underset{\boldsymbol{x}\in \mathscr{X}}{\inf} 2 u_i  d_{ij} (1-\cos(2\vartheta)),
\end{equation}
which is positive for all $\vartheta < \frac{\pi}{2}$. Note that if $\mathscr{X}$ contained the ball $\mathscr{B}_{ij}$, the synergy gap would theoretically be zero. However, since $\boldsymbol{x} \in \mathscr{B}_{ij}$ implies that the vehicle is already inside the obstacle, excluding these states does not pose an issue in practice. 

The results of Propositions~\ref{prop:obstacle_safety}-\ref{prop:obstacle_SCBF} motivate the following update law for $q$, expressed through the hybrid system~\eqref{eq:hybrid_sys}:
\begin{subequations}
    \begin{align}\label{eq:G_set}
        \mathscr{G}(\boldsymbol{x}) \quad\,\,=\quad\,\, \left\{q \in \mathscr{Q} : h_2(\boldsymbol{x},q) - m_2(\boldsymbol{x}) = 0 \right\},\\\label{eq:D_set}
      \mathscr{D} = \{ (\boldsymbol{x},q) \in \mathscr{X} \times\mathscr{Q}:h_2(\boldsymbol{x},q) - m_2(\boldsymbol{x})\geq \delta\},\\
      \mathscr{C} = \{ (\boldsymbol{x},q) \in \mathscr{X} \times\mathscr{Q}:h_2(\boldsymbol{x},q) - m_2(\boldsymbol{x})\leq \delta\},\label{eq:C_set}
    \end{align}
\end{subequations}
where $\delta > 0$ is the desired hysteresis width, which can be state-dependent such that $\delta \coloneqq \delta(\boldsymbol{x})$. Importantly, by selecting $\delta < \mu$, solutions are not allowed to flow near the critical points, and $h_2$ is strictly decreasing during jumps. This is illustrated in the example shown in Figure~\ref{fig:synergycbf}.

 For every agent-obstacle pair $i,j$, we thus incorporate the constraint:
\begin{equation}\begin{split}\label{eq:h2_constraint}
\!\!\!\dot{h}_2(\boldsymbol{x},q, \boldsymbol{u}_i)\leq -\alpha_2(h_{2}(\boldsymbol{x},q))\!\Leftrightarrow \!
    g_{ij}(\boldsymbol{x},q)r_i \leq  b_{ij}(\boldsymbol{x},q),
\end{split}
\end{equation}
where $b_{ij}(\boldsymbol{x},q) =-L_f h_2(\boldsymbol{x},q) -\left(L_g h_2\right)_2(\boldsymbol{x},q)a_i^d-\alpha_2(h_2(\boldsymbol{x},q))$ and $g_{ij}(\boldsymbol{x},q) = \left(L_g h_2\right)_1(\boldsymbol{x},q)$.  Compactly,
\begin{equation}
\begin{split}\label{eq:Go}
   \boldsymbol{g}_i& = \begin{bsmallmatrix}
        g_{i1}&
        g_{i1}& \cdots & 
       g_{in_o}
    \end{bsmallmatrix}^\top, \,\,\,
        \boldsymbol{b}_i = \begin{bsmallmatrix}
   b_{i1}&
        b_{i1}& \cdots &
        b_{in_o}
    \end{bsmallmatrix}^\top, \\
    \quad  G' & =  \begin{bsmallmatrix}
        \boldsymbol{g}_{1}&  \boldsymbol{0} & \cdots& \cdots & \boldsymbol{0} \\
        \boldsymbol{0} & \boldsymbol{g}_2 &  \boldsymbol{0} &\cdots &  \boldsymbol{0}\\
        \vdots & \ddots & \ddots & \ddots & \vdots \\
        \boldsymbol{0} & \cdots & \boldsymbol{0} & \boldsymbol{g}_{n_a-1} &  \boldsymbol{0}\\
        \boldsymbol{0} & \cdots & \cdots &  \boldsymbol{0} & \boldsymbol{g}_{n_a}
    \end{bsmallmatrix}, \quad \boldsymbol{b}' = \begin{bsmallmatrix}
        \boldsymbol{b}_1\\
        \boldsymbol{b}_2 \\
        \vdots \\
        \boldsymbol{b}_{n_a-1} \\
        \boldsymbol{b}_{n_a}
    \end{bsmallmatrix}.
\end{split}
\end{equation}
The matrix $G'\in \mathbb{R}^{n_on_a\times n_a}$ and vector $\boldsymbol{b}'\in \mathbb{R}^{n_on_a}$ represent the collection of agent-to-obstacle constraints to be included in \eqref{eq:QP}. The remaining constraints are formulated next.
\subsection{Agent-to-Agent Constraint}\label{ssec:agent_avoid}
The derivation of the constraints for pairwise safety between agents is almost equivalent to the previous derivation. The main difference is that the vehicles can cooperate to avoid potential collisions. We will continue with the same notation as above, i.e. $\boldsymbol{x} = [\boldsymbol{x}_i, \boldsymbol{x}_j]^\top$ are the stacked states of the $i$th and $j$th agent etc. Moreover, for each $i,j \in \mathscr{I}_a$ with $j \neq i$, let $h_0$ and $h_1$ be defined as \eqref{eq:h0} and \eqref{eq:h1}, respectively. We will denote the derivative of a function $h_*$ as in \eqref{eq:hastdot} augmented with the control inputs of agent $j$:
\begin{equation}\label{eq:h2_dot_a}
    \dot{h}_* = L_{f}h_* + \sum_{k=i,j} \left(L_{g_k}h_*\right)_2 a_k+ \left(L_{g_k}h_*\right)_1 r_k.
\end{equation}
 
The function $h_1$ has undefined relative degree with respect to $r_i,r_j$ when the orientation of the relative velocity vector $\boldsymbol{v}_{ij}$ coincides with the orientation of the relative position vector~$\boldsymbol{p}_{ij}$ such that $\atan(\dot{y}_{ij},\dot{x}_{ij}) =  \atan(y_{ij},x_{ij})$, and the agents' headings align with that same line. This occurs for instance when the agents are moving directly toward each other's centers, which  motivates a control design essentially identical to the one used for agent-obstacle safety. 

Specifically, we define the function $h_2$ equivalent to \eqref{eq:h2}, but an equally valid option would be to shift the principal rotations with respect to agent $j$ in this case. The resulting hybrid system is on the form \eqref{eq:hybrid_sys},\eqref{eq:G_set}-\eqref{eq:C_set} with flow set
\begin{equation}\label{eq:flow_agents}
    \mathscr{F} = \left\{ \,\begin{bmatrix}
        f_i(\boldsymbol{x}_i) +g_i(\boldsymbol{x}_i)\boldsymbol{u}_i\\
        f_j(\boldsymbol{x}_j) +g_j(\boldsymbol{x}_j)\boldsymbol{u}_j
    \end{bmatrix} : \boldsymbol{u}_i\in \mathscr{U}_i, \boldsymbol{u}_j\in \mathscr{U}_j  \,\right\}.
\end{equation}
Similar to \eqref{eq:crit}, the critical orientations correspond to
\begin{align}
    \psi_{i}= \atan(y_{ij},x_{ij}) \mp \vartheta,\quad \psi_j = \atan(y_{ij},x_{ij})
\end{align}
and the vector $ R(\pm \vartheta) \boldsymbol{v}_i-\boldsymbol{v}_j$ oriented along the line $\boldsymbol{p}_{ij}$.The proofs of the following two propositions are omitted because they follow the same reasoning as in the proofs of Propositions~\ref{prop:obstacle_safety}-\ref{prop:obstacle_SCBF}.

\begin{proposition}\label{prop:agents_safety}
 For the system \eqref{eq:hybrid_sys},\eqref{eq:flow_agents}, if $\vartheta < \frac{\pi}{2}$ and $\mathscr{S}_2$ is forward invariant, then $\mathscr{S}_0$ is forward invariant.
\end{proposition}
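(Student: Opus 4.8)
The plan is to mirror the proof of Proposition~\ref{prop:obstacle_safety} almost verbatim, exploiting the fact that the sole structural change---agent~$j$ now being actuated through the flow map \eqref{eq:flow_agents}---leaves the decomposition of $h_1$ untouched. As in the agent-obstacle case, the goal is to establish the inclusion $\mathscr{S}_2 \subset \mathscr{S}_1$, from which forward invariance of $\mathscr{S}_0$ follows by the high-order CBF argument of~\cite{xiao2021high}: along any trajectory confined to $\mathscr{S}_2$ we have $h_2 \leq 0$, hence $h_1 \leq 0$, and by \eqref{eq:h1} this yields $\dot{h}_0 \leq -\alpha_0(h_0)$, so the comparison lemma keeps $h_0 \leq 0$.

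First I would recompute $h_1$ in the inter-agent setting. Since $h_0$ and $h_1$ are defined exactly as in \eqref{eq:h0}--\eqref{eq:h1}, and $\dot{h}_0 = 2\boldsymbol{p}_{ij}^\top \boldsymbol{v}_{ij}$ with $\boldsymbol{v}_{ij} = \boldsymbol{v}_i - \boldsymbol{v}_j$, the first-order term splits as $2\boldsymbol{p}_{ij}^\top\boldsymbol{v}_i - 2\boldsymbol{p}_{ij}^\top\boldsymbol{v}_j$. Because $h_2$ is taken identical to \eqref{eq:h2}---rotating only agent~$i$'s velocity by $R(q\vartheta)$ and leaving the $-2\boldsymbol{p}_{ij}^\top\boldsymbol{v}_j$ term untouched---the difference telescopes to
\[ h_1(\boldsymbol{x}) = h_2(\boldsymbol{x},q) + 2\boldsymbol{p}_{ij}^\top \Delta(q\vartheta)\boldsymbol{v}_i - \varepsilon(\boldsymbol{x}), \]
which is precisely the identity appearing in the proof of Proposition~\ref{prop:obstacle_safety}, with $\Delta(q\vartheta) = I - R(q\vartheta)$. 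The crucial point is that the $\boldsymbol{v}_j$ contributions of $h_1$ and $h_2$ coincide and cancel, so no $\Vert\boldsymbol{v}_j\Vert$ term survives.

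Next I would reapply the bound \eqref{eq:Deltabound}. Since the penalty $\varepsilon(\boldsymbol{x}) = 4 d_{ij} u_i \sin(\vartheta)$ depends only on the relative distance and on agent~$i$'s speed $u_i = \Vert\boldsymbol{v}_i\Vert$, and since $\boldsymbol{p}_{ij}^\top \Delta(q\vartheta)\boldsymbol{v}_i \leq d_{ij} u_i \Vert\Delta\Vert \leq 2 d_{ij} u_i \sin(\vartheta)$ holds unchanged for every $\vartheta < \frac{\pi}{2}$, the residual $2\boldsymbol{p}_{ij}^\top \Delta(q\vartheta)\boldsymbol{v}_i - \varepsilon(\boldsymbol{x}) \leq 0$. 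This gives $h_1(\boldsymbol{x}) \leq h_2(\boldsymbol{x},q)$ for all $(\boldsymbol{x},q) \in \mathscr{X}\times\mathscr{Q}$, hence $\mathscr{S}_2 \subset \mathscr{S}_1$, and the high-order CBF argument closes the proof.

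The main obstacle is conceptual rather than computational: one must confirm that adding a second actuated agent does not spoil the comparison $h_1 \leq h_2$. The natural worry is that agent~$j$'s controlled velocity could inject a $\Vert\boldsymbol{v}_j\Vert$-dependent term into the bound, which the penalty $\varepsilon$---sized only for $u_i$---would fail to dominate. The resolution is exactly the cancellation noted above: because the synergistic rotation acts solely on $\boldsymbol{v}_i$, the $\boldsymbol{v}_j$ terms drop out of $h_1 - h_2$, so the penalty designed for the static-obstacle case suffices here as well. I would therefore only need to remark on this cancellation and then invoke the identical estimate, rather than redo any calculation.
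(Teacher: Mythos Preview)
Your proposal is correct and matches the paper's approach exactly: the paper explicitly omits the proof of this proposition on the grounds that it follows the same reasoning as Proposition~\ref{prop:obstacle_safety}, and your argument faithfully reproduces that reasoning while correctly observing that the $\boldsymbol{v}_j$-terms cancel in $h_1-h_2$ so the penalty $\varepsilon$ need not be resized.
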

\begin{proposition}\label{prop:agents_SCBF}
    If $\vartheta < \frac{\pi}{2}$, the function \eqref{eq:h2} is a candidate SCBF for \eqref{eq:hybrid_sys},\eqref{eq:flow_agents}. That is, the synergy gap $\mu$ in \eqref{eq:mu} is nonzero.
\end{proposition}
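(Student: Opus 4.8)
The plan is to follow the proof of Proposition~\ref{prop:obstacle_SCBF} almost verbatim, the key observation being that the discrete variable $q$ enters the function $h_2$ in \eqref{eq:h2} only through the single term $2\boldsymbol{p}_{ij}^\top R(q\vartheta)\boldsymbol{v}_i$. Although $\boldsymbol{v}_j$ is now steerable through $r_j$ in the flow set \eqref{eq:flow_agents}, the remaining terms $-2\boldsymbol{p}_{ij}^\top\boldsymbol{v}_j$, $\alpha_0(h_0(\boldsymbol{x}))$, and the penalty $\varepsilon(\boldsymbol{x})$ are all independent of $q$. Consequently the difference $h_2(\boldsymbol{x},+1)-h_2(\boldsymbol{x},-1)=2\boldsymbol{p}_{ij}^\top\bigl(R(\vartheta)-R^\top(\vartheta)\bigr)\boldsymbol{v}_i$ coincides exactly with the expression in the obstacle case, so that $\arg\min_{q\in\mathscr{Q}}h_2(\boldsymbol{x},q)=-1$ on the portion $\mathscr{Z}^+$ of the critical set with $q=+1$, and by the same symmetry argument it suffices to bound the synergy gap \eqref{eq:mu} there.

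Next I would characterize the critical set for the two-agent problem, where the relative degree of $h_2$ degenerates only when both heading rates lose their influence, i.e.\ when $\left(L_{g_i}h_2\right)_1=2\boldsymbol{p}_{ij}^\top R(q\vartheta)S\boldsymbol{v}_i=0$ and $\left(L_{g_j}h_2\right)_1=-2\boldsymbol{p}_{ij}^\top S\boldsymbol{v}_j=0$ hold simultaneously. Since $R(q\vartheta)$ and $S$ commute, the first condition forces $R(q\vartheta)\boldsymbol{n}_i$ to be collinear with $\boldsymbol{n}_{ij}$, giving $\psi_i=\atan(y_{ij},x_{ij})\mp\vartheta$ exactly as in \eqref{eq:crit}, while the second forces $\boldsymbol{v}_j$ to be collinear with $\boldsymbol{p}_{ij}$, i.e.\ $\psi_j=\atan(y_{ij},x_{ij})$, matching the stated critical orientations.

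Finally, on $\mathscr{Z}^+$ the condition on $\psi_i$ yields $R(\vartheta)\boldsymbol{n}_i=\boldsymbol{n}_{ij}$, hence $\boldsymbol{n}_{ij}^\top R(\vartheta)\boldsymbol{n}_i=1$ and $\boldsymbol{n}_{ij}^\top R^\top(\vartheta)\boldsymbol{n}_i=\boldsymbol{n}_{ij}^\top R(-2\vartheta)\boldsymbol{n}_{ij}=\cos(2\vartheta)<1$ for every $0<\vartheta<\tfrac{\pi}{2}$. Substituting into \eqref{eq:mu_proof} gives $\mu=\inf 2u_i d_{ij}(1-\cos 2\vartheta)$, which is strictly positive because $u_i\geq\uimin>0$ and $d_{ij}\geq\epsilon>0$ on $\mathscr{X}$. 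The only genuinely new feature relative to Proposition~\ref{prop:obstacle_SCBF} is the extra critical condition coming from $r_j$, and the main point to argue carefully is that this condition merely shrinks $\mathscr{Z}^+$: because both the $q$-dependent part of $h_2$ and the inner products appearing in the gap depend only on $\psi_i$ and the geometry of $\boldsymbol{p}_{ij}$, the factor $1-\cos(2\vartheta)$ is left untouched, so the infimum over the smaller critical set remains bounded below by the same positive quantity and the synergy gap is nonzero.
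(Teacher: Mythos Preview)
Your proposal is correct and matches the paper's intended argument: the paper in fact omits the proof, stating only that it ``follows the same reasoning as in the proofs of Propositions~\ref{prop:obstacle_safety}--\ref{prop:obstacle_SCBF},'' and your sketch fills in precisely that reasoning. Your additional observation---that the extra critical condition $(L_{g_j}h_2)_1=0$ coming from $r_j$ only shrinks $\mathscr{Z}^+$ while leaving the $q$-dependent part of $h_2$ (and hence the gap expression $2u_id_{ij}(1-\cos 2\vartheta)$) unchanged---is exactly the point needed to justify why Proposition~\ref{prop:obstacle_SCBF} transfers verbatim.
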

With $g_i^j(\boldsymbol{x},q) =\left(L_{g_i}h_2\right)_1(\boldsymbol{x},q)$, $g_j^i(\boldsymbol{x}) = \left(L_{g_j}h_2\right)_1(\boldsymbol{x})$, and
$b_{ij}(\boldsymbol{x},q) = -L_f h_2(\boldsymbol{x},q)- \underset{k=i,j}{\sum} \left(L_{g_k}h_2\right)_2(\boldsymbol{x},q) a_k^d - \alpha_2(h_2(\boldsymbol{x},q))$, the resulting constraint matrices are given by
\begin{align}\label{eq:Ga}
\begin{split}
    & G''= \begin{bsmallmatrix}
     g_{1}^2 & g_{2}^1 & 0\quad & 0\quad &\cdots &  0 & 0 \\
     g_{1}^3 & 0 & g_{3}^1 & 0 \quad& \cdots &0 & 0 \\
     & & && \vdots& &  & \\
     0 & 0 & 0\quad & 0\quad& \cdots& g_{n_a}^{n_a-1} &\mathclap{\,\,\,g_{n_{a}-1 }^{n_a}},
 \end{bsmallmatrix}, \,\,\boldsymbol{b}'' =  
 \begin{bsmallmatrix}
         b_{12} \\ 
         b_{13} \\ \\ \vdots \\  b_{n_a-1 n_a } 
     \end{bsmallmatrix}
 \end{split},
\end{align}
where the  row dimension (i.e., the number of constraints) equals $\frac{(n_a-1)n_a}{2}$. 

By Proposition~\ref{prop:obstacle_SCBF} and \ref{prop:agents_SCBF}, the function \eqref{eq:h2} is a valid SCBF when agent $i$ has an unbounded input space, i.e. $\mathscr{R}_i = \mathbb{R}$. However, we usually have $\rimax < \infty$. If $j$ represents a static obstacle, safety can nevertheless be assured under certain conditions on the functions $\alpha_0,\alpha_2$ \cite{Marley24}.

These guarantees do not hold when accounting for the dynamics of a moving obstacle. Still, the existence of feasible control actions that guarantee safety can be facilitated by a constructive choice of the parameters involved in the constraint \eqref{eq:h2_constraint}, as shown in the following theorem.

\begin{theorem}
Suppose that  $\underset{t\geq0}{\sup} \left\Vert\boldsymbol{v}_{j}(t)\right\Vert\coloneqq u_{o\max}<\infty$ and $\underset{t\geq0}{\sup}  \left\Vert \dot{\boldsymbol{v}}_{j}(t)\right\Vert\coloneqq a_{o\max} <\infty $. If
 \begin{equation}\label{eq:vartheta_bound}
      \sup_{\mathclap{\substack{\\[1mm]\vartheta \in (0,\frac{\pi}{2})\\\gamma > 0}}}  \left[4\beta_1\sin^2(\vartheta) - \left( \sqrt{5} \beta_1 + \beta_2\right )  \sin(\vartheta) - \beta_3\right] > 0,
    \end{equation}
   with $\beta_1 \triangleq \rimax$, $\beta_2\triangleq 2\left(\frac{\uimax+u_{o\max}}{\dijmin}\right)+\frac{\aimax}{\uimin}$, and $\beta_3 \triangleq \frac{a_{o\max}+a_{i\max}}{\uimin} + \gamma\left(\frac{u_{o\max}+u_{i\min}}{\uimin}\right)$, then  $\mathscr{S}_0$ is forward (control) invariant for the system~\eqref{eq:flow}.
\end{theorem}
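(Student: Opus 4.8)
The plan is to reduce the claim to a pointwise feasibility check for the heading-rate constraint \eqref{eq:h2_constraint} and then invoke Proposition~\ref{prop:obstacle_safety}. By that proposition it suffices to prove that $\mathscr{S}_2$ is forward invariant, and since the hybrid dynamics \eqref{eq:hybrid_sys}--\eqref{eq:flow} leave $\boldsymbol{x}$ unchanged across jumps, forward invariance of $\mathscr{S}_2$ reduces (by a Nagumo-type argument) to certifying that on the boundary $\{h_2=0\}\cap\mathscr{C}$ there exists an admissible $r_i\in\mathscr{R}_i$, i.e. $|r_i|\le\rimax$, rendering $\dot h_2\le-\alpha_2(h_2)=0$, for every admissible nominal acceleration $|a_i^d|\le\aimax$ and every obstacle motion compatible with $\|\boldsymbol{v}_j\|\le u_{o\max}$ and $\|\dot{\boldsymbol{v}}_j\|\le a_{o\max}$. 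Treating $a_i^d$ and the obstacle state as bounded disturbances yields the robust, worst-case statement that matches the quantities entering $\beta_1,\beta_2,\beta_3$.

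Next I would decompose $\dot h_2 = L_f h_2 + (L_gh_2)_2\,a_i^d + (L_gh_2)_1\,r_i$ and bound the three pieces after normalising by the positive factor $2u_id_{ij}$. The control-authority term is governed by $(L_gh_2)_1$, a multiple of $u_id_{ij}$ times a sine that vanishes exactly at the critical orientations \eqref{eq:crit}; choosing $r_i=-\rimax\,\mathrm{sign}\big((L_gh_2)_1\big)$ turns this contribution into $-\rimax\,|{\sin(\cdot)}|$, the source of the positive $\sin^2\vartheta$ term once the synergy lower bound is inserted. The remaining terms, $L_fh_2$ and $(L_gh_2)_2\,a_i^d$, collect the obstacle velocity and acceleration, the agent acceleration, and the slope of $\alpha_0$; bounding them using $\uimin\le u_i\le\uimax$, $d_{ij}\ge\dijmin$, $\aimax$, $u_{o\max}$, $a_{o\max}$, and a linear growth bound on $\alpha_0$ parametrised by $\gamma$ produces precisely the $\beta_2\sin\vartheta$ and $\beta_3$ contributions (the $\gamma$-term being the normalised $\alpha_0'(h_0)\dot h_0$ disturbance, which is why the statement optimises over $\gamma>0$).

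The crux is a uniform lower bound on the available control authority along flows. Here I would use the hybrid mechanism: since the hysteresis width satisfies $\delta<\mu$ with synergy gap $\mu=4u_id_{ij}\sin^2\vartheta$ (Proposition~\ref{prop:obstacle_SCBF}), solutions never flow through the critical orientation of the active mode, so $|(L_gh_2)_1|$ stays bounded away from zero by an amount controlled by $\mu$; this is what promotes $\rimax$ to the leading $4\beta_1\sin^2\vartheta$ term. The accompanying $\sqrt5\,\beta_1\sin\vartheta$ correction should arise from the worst-case interplay between the authority direction and the acceleration-coupling direction $(L_gh_2)_2$, whose $\varepsilon$-penalty carries a factor of two that combines into $\sqrt{1+4}$. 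Assembling these bounds, feasibility of some $r_i\in\mathscr{R}_i$ on the boundary is guaranteed exactly when the bracketed expression in \eqref{eq:vartheta_bound} is positive for some admissible $\vartheta$ and $\gamma$; $\mathscr{S}_2$ is then forward invariant, and Proposition~\ref{prop:obstacle_safety} delivers forward (control) invariance of $\mathscr{S}_0$.

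I expect the main obstacle to be this third step: making the authority lower bound rigorous and uniform. One must show that the critical orientation of the active mode is genuinely excluded along every flow (tying together $\delta$, $\mu$, and \eqref{eq:crit}), and simultaneously rule out or dominate the second, benign zero-gradient orientation where the agent points away from the obstacle and $(L_gh_2)_1$ also vanishes; there $r_i$ is powerless, so one must instead argue that on $\{h_2=0\}$ the $\varepsilon$-augmented drift is already safe despite a fast moving obstacle. Sharpening these estimates is what pins down the precise constants appearing in \eqref{eq:vartheta_bound}.
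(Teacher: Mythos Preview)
Your plan matches the paper's proof almost step for step: reduce to feasibility of \eqref{eq:h2_constraint} on $\partial\mathscr{S}_2$, invoke Proposition~\ref{prop:obstacle_safety}, bound $L_fh_2$ and $(L_gh_2)_2a_i$ using $u_{o\max},a_{o\max},\aimax,\uimin,\uimax,\dijmin$ together with a linear $\alpha_0(h)=\gamma h$, and use the hybrid flow-set constraint $h_2-m_2\le\delta<\mu$ to secure a uniform lower bound on the control authority $|(L_gh_2)_1|$.

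The one point where your intuition diverges from the paper is the origin of the $\sqrt{5}$. It does \emph{not} come from an interplay between $(L_gh_2)_1$ and $(L_gh_2)_2$ via the $\varepsilon$-penalty. In the paper it appears entirely inside the control-authority bound: writing $SR(q\vartheta)=\big(R(q\vartheta)-R^\top(q\vartheta)\big)-\Lambda(q\vartheta)$ with $\Lambda\triangleq R(q\vartheta)-R^\top(q\vartheta)-SR(q\vartheta)$, one has $(L_gh_2)_1=2u_i\boldsymbol{p}_{ij}^\top(R-R^\top)\boldsymbol{n}_i-2u_i\boldsymbol{p}_{ij}^\top\Lambda\boldsymbol{n}_i$; the first piece is exactly $h_2(\cdot,q)-h_2(\cdot,-q)$, controlled by $\delta$ on the flow set, and the second is bounded via $\|\Lambda(q\vartheta)\|\le\sqrt{5}\sin\vartheta$. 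Substituting $\delta<\mu$ and $1-\cos(2\vartheta)=2\sin^2\vartheta$ then produces the $4\beta_1\sin^2\vartheta-\sqrt{5}\beta_1\sin\vartheta$ combination directly. Your last paragraph also correctly flags the ``pointing-away'' zero of $(L_gh_2)_1$; the paper does not handle this inside the proof but relegates it to a separate remark, so your concern is well placed but outside the scope of what the theorem actually establishes.
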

\begin{proof}
We show that $h_2$ is increasing along the system~\eqref{eq:hybrid_sys}-\eqref{eq:flow} on the boundary $\partial \mathscr{S}_2 = \{(\boldsymbol{x},q)\in\mathscr{X}\times \mathscr{Q}: h_2(\boldsymbol{x},q) = 0\}$, for explicitly bounded inputs, which implies, by Proposition \ref{prop:obstacle_safety}, that $\mathscr{S}_0$ is forward (control) invariant. For this to hold, we require that 
\begin{equation}\label{eq:r_i_bound}
\begin{split}
     \sup_{r_i \in \mathscr{R}_i}  - \left(L_{g}h_2\right)_1(\boldsymbol{x},q) r_i >&\\ \left(L_{g}h_2\right)_1(\boldsymbol{x},q)a_i + L_fh_2(\boldsymbol{x},q) &,
\end{split}
\end{equation}
for all $(\boldsymbol{x},q) \in \partial \mathscr{S}_2$ and $a_i \in\mathscr{A}_i$. This inequality is trivially satisfied if the right-hand side is negative. Therefore, a sufficient condition for \eqref{eq:r_i_bound} is 
\begin{equation}\label{eq:r_i_bound_strict}
\begin{split}
      \underset{
    (\boldsymbol{x},q)\in\partial \mathscr{S}_2}{\inf}\,\Vert\left(L_{g}h_2\right)_1(\boldsymbol{x},q)\Vert \rimax >\\ \underset{\substack{(\boldsymbol{x},q)\in \partial \mathscr{S}_2\\ a_i \in \mathscr{A}_i}}{\sup} \left[\left(L_{g}h_2\right)_1(\boldsymbol{x},q)a_i + L_fh_2(\boldsymbol{x},q)\right].
\end{split}
\end{equation}
It can be established from \eqref{eq:h2} that
\begin{align}
    \begin{split}
        L_fh_2(\boldsymbol{x},q) &\leq 2d_{ij}\Big(\nabla \alpha_0(h(\boldsymbol{x})) \Vert \boldsymbol{v}_{ij}\Vert+\Vert\dot{\boldsymbol{v}}_{j}\Vert\Big)+  \\  &2\Vert  \boldsymbol{v}_{ij}\Vert u_i\Big( \Vert  \Delta(q,\boldsymbol{x})\Vert+\sin\left(\vartheta\right)\Big),
    \end{split}
      \\
   \left(L_g h_2\right)_2 (\boldsymbol{x},q) &\leq 2d_{ij}\Big(1+\sin\left(\vartheta\right)\Big),\,\, \forall (\boldsymbol{x},q) \in \partial \mathscr{S}_2.
\end{align}
Evaluating \eqref{eq:h2},\eqref{eq:mu_proof} under the update law \eqref{eq:G_set}-\eqref{eq:C_set} gives
\begin{align}
\left(L_g h_2\right)_1 (\boldsymbol{x},q) &\geq \delta - 2d_{ij}u_{i}\left\Vert \Lambda(q\vartheta)\right\Vert,
\end{align}
for all $(\boldsymbol{x},q) \in \partial \mathscr{S}_2$, where  $ \Lambda(q\vartheta) \triangleq  R(q \vartheta)- R^\top(q\vartheta) -  S R(q\vartheta)$. Taking $\alpha_0(h) = \gamma h, \gamma>0$ and using that $\delta < \mu$  and $\Vert \Lambda(q\vartheta) \Vert \leq \sqrt{5} \sin(\vartheta)$,  it follows that \eqref{eq:r_i_bound_strict} is guaranteed if there exists $\vartheta < \frac{\pi}{2}$ such that
\begin{equation}\begin{split}\label{eq:r_bound_strictstrict}
     \left(4\sin^2(\vartheta) - \sqrt{5}\sin(\vartheta)  \right)\rimax >\gamma\left( \frac{\Vert \boldsymbol{v}_{ij}\Vert}{u_i}\right)+\\ 2\left(\frac{\Vert\boldsymbol{v}_{ij}\Vert}{d_{ij}} +\frac{\aimax}{2 u_i}\right) \sin(\vartheta)+\frac{a_{o\max} + \aimax }{u_i},
\end{split}
\end{equation}
    where we have substituted $1-\cos(2\vartheta) = 2\sin^2(\vartheta)$. Using $\Vert \boldsymbol{v}_{ij}\Vert \leq\Vert \boldsymbol{v}_{i}\Vert+\Vert \boldsymbol{v}_{j}\Vert$ and rearranging \eqref{eq:r_bound_strictstrict} results in \eqref{eq:vartheta_bound}.
\end{proof}
In the case of cooperating agents~\eqref{eq:flow_agents}, note that the constraint regressor of agent $j$ (i.e. the term $\left(L_{g_j}h_{2}\right)_1(\boldsymbol{x})$ in \eqref{eq:h2_dot_a}) is independent of the discrete variable $q$. Therefore, the hybrid solution does not prevent this term from becoming zero. Consequently, the analysis of this system is equivalent to the above and is thus omitted.

\begin{remark}
   Our solution cannot strictly guarantee safety for orientations $\psi_{i} = \atan(y_{ij},x_{ij})\mp \vartheta + \pi$.    However, this situation implies that agent $i$ is oriented away from obstacle (resp. agent)  $j$. Hence, such configurations can only be contained in the critical set $\mathscr{Z}$ if the obstacle (resp. agent) is moving toward the vehicle and at a higher speed than it (and if $j$ is an agent, $\psi_j = \atan(y_{ij},x_{ij})$). We have neglected these edge cases since safety cannot be guaranteed by heading control alone in such scenarios. However, safety can be guaranteed by integrating acceleration control. For example, if $j$ represents a collision-seeking obstacle, increasing the vehicle's speed above that of the obstacle would suffice to avoid it. If $j$ is another agent, they could cooperate to resolve the situation, e.g. agent~$j$ could reduce its speed while agent~$i$ increases its own. Since these solutions respect the nonzero speed constraint and do not require modifications to the controller \eqref{eq:QP}, we may assume that appropriate accelerations are applied if any of the described situations arise. Explicit utilization of the acceleration input will be considered in the future.
\end{remark}

\subsection{Collaborative Agents-to-Obstacle Constraint}\label{ssec:coop_ob_avoid}
  A more unified evasive behavior can be induced by treating the fleet's barycenter, defined in~\eqref{eq:pb}, as the collision point, thereby controlling the formation as a single unit, similar to \cite{matous24}. To formulate this constraint, we need to express the safety distance between the barycenter and the obstacle in a way that reflects agent safety.  Denote $d_{bi} \triangleq \Vert \boldsymbol{\sigma}_{f,i}\Vert$ for the distance between the barycenter $\boldsymbol{p}_b$ and agent $i$, expressed through the task variable introduced in Section~\ref{sec:problem}. Next, define  \begin{equation}
    d_{f} \triangleq \max_{i\in\mathscr{I}_a} d_{bi} , \quad  d_{f,d}\triangleq
     \max_{i\in\mathscr{I}_a} \Vert \boldsymbol{\sigma}_{f,i,d} \Vert,
\end{equation}
such that $d_{f}$ is a (varying) radius encompassing every vehicle in the formation, and $d_{f,d}$ is the (constant) desired formation radius. Then, $\underset{t\geq 0}{\inf} d_{bj}(t) \geq d_{bj\min}$ implies \eqref{eq:obj_obscolav} if we take 
\begin{equation}\label{eq:dbjmin}
\begin{split}
  d_{bj\min} \triangleq\max_{i\in\mathscr{I}_a} d_{ij\min} +\max\left\{d_{f}, d_{f,d} \right\}.
\end{split}
\end{equation}

The formulation of this variable, particularly the second term, is motivated by the fact that, if the agents strive to keep their formation, then the formation radius will only grow when it is smaller than the desired formation radius. It is therefore safe to treat $d_{bj\min}$ as constant or nonincreasing, which simplifies the control design. Note, however, that this simplification is only appropriate when the nominal velocities are designed as in \cite{matous24} or formation-keeping is enforced in other ways, e.g., as an additional constraint in~\eqref{eq:QP}. 

\begin{remark}
Involving the derivative of \eqref{eq:dbjmin} is problematic since this function is nonsmooth. Furthermore, doing so would introduce the formation radius as a property that can be varied to preserve safety, which is undesirable. An alternative is to supplement the barycenter-based constraint with the pairwise constraints \eqref{eq:Go} for additional safeguarding in case agents significantly deviate from the prescribed formation. 
\end{remark}

The function $h_0$ representing collaborative collision avoidance of an obstacle $j$ is defined as
\begin{equation}\label{eq:h0_b} 
    h_0(\boldsymbol{x}) = {d_{bj\min}}^2-{d_{bj}}^2, 
\end{equation}
where $\boldsymbol{x} = [\boldsymbol{x}_1^\top, \dots,\boldsymbol{x}^\top_{n_a}, \boldsymbol{x}^\top_{j}]^\top$. Since a nonhybrid formulation of the form \eqref{eq:h1} would result in undefined relative degree when all agents are oriented in the direction of the line $\boldsymbol{p}_{bj}$, we propose the following candidate SCBF:
\begin{equation}\label{eq:h2_b}
  h_2(\boldsymbol{x},q) =2 \boldsymbol{p}_{bj}^\top R(q \vartheta) \boldsymbol{v}_{b} -2\boldsymbol{p}_{bj}^\top \boldsymbol{v}_{j}+\alpha_0(h_0(\boldsymbol{x}))+\varepsilon(\boldsymbol{x}),
\end{equation}
where the penalty term is $\varepsilon(\boldsymbol{x}) = 4 u_b d_{bj} \sin(\vartheta)$, where $u_b \triangleq \Vert \boldsymbol{v}_b\Vert\equiv  \Vert \frac{1}{n_a}\sum_{i\in\mathscr{I}_a} \boldsymbol{v}_i\Vert$ is the speed of the barycenter. It is straightforward to verify that safety can be ensured through \eqref{eq:h2_b} given that $u_b\neq 0$, i.e., the synergy gap is nonzero. Note, however, that forward invariance of $\mathscr{S}_0$ depends on whether $d_{bj\min}$ is in fact nonincreasing. 
 
 The following constraints can replace the pairwise constraints given by \eqref{eq:Go} or they can be jointly included in \eqref{eq:QP}:
\begin{equation}
\begin{split}\label{eq:Gof}
    \boldsymbol{g}_{bj}& = \frac{2\boldsymbol{p}_{bj}^\top R(q\vartheta)  S}{n_a}\begin{bmatrix}\boldsymbol{v}_1&\boldsymbol{v}_2& \cdots & \boldsymbol{v}_{n_a}
    \end{bmatrix},\\
    \quad  G'''& =  \begin{bsmallmatrix}
        \boldsymbol{g}_{b1} \\
        \boldsymbol{g}_{b2} \\
        \vdots \\
        \boldsymbol{g}_{b n_o}
    \end{bsmallmatrix}, \quad \boldsymbol{b}'''= \begin{bsmallmatrix}
        b_1\\
        b_2 \\
        \vdots \\
       b_{n_o}
    \end{bsmallmatrix},
\end{split}
\end{equation}
where $b_j(\boldsymbol{x},q) = -L_f h_2(\boldsymbol{x},q) -\sum_{i\in\mathscr{I}_a}\left( L_{g_i} h_2 \right)_2 (\boldsymbol{x},q)a_i^d - \alpha_2(h_2(\boldsymbol{x},q))$ and $f = [f^\top_1, \dots,f^\top_{n_a}, f^\top_j]^\top$.  Note that $ G'''\in \mathbb{R}^{n_o\times n_a}$ and $\boldsymbol{b}'''\in \mathbb{R}^{n_o}$, so the dimension of this constraint set is reduced compared to the pairwise constraints \eqref{eq:Go}.

\section{Numerical Simulations}\label{sec:sim}
The effectiveness of the approach is demonstrated in simulations of a fleet of nonholonomic agents modeled by \eqref{eq:model}. Each agent is restricted by the bounds $\rimax = 0.5$~rad/s, $\uimin = 0.3$~m/s, $\uimax = 0.8$~m/s, and $\aimax = 0.25$~m/s$^2$. For all agent pairs, we selected the safety distances as $\dijmin = 10$~m. In the computation of the constraints defined by \eqref{eq:Go} and \eqref{eq:Ga}, we used $\vartheta = \frac{\pi}{4}$~rad, the functions $\alpha_0,\alpha_2$ were taken as linear with $\gamma_0 = 1,\gamma_2 = 0.1$, and the hysteresis width was selected as $\delta(u_i,d_{ij}) = u_id_{ij}(1-\cos(2\vartheta))$. The same design was used for the barycenter-based constraints~\eqref{eq:Gof}, apart from the hysteresis width which was taken as $\delta(u_b, d_{bj}) = u_b d_{bj}(1-\cos(2\vartheta))$. The nominal controller was designed according to \cite{matous24}.

In the first example, we simulated five vehicles, with desired formation shape given by 
\begin{equation}
    \begin{bmatrix}
        \boldsymbol{p}^f_1, .., \boldsymbol{p}^f_5
    \end{bmatrix} = \begin{bmatrix}
        30 & -30 & 0 & -30  & 30 \\
        -20 & 30 & 0 & -30  & 20
    \end{bmatrix}.
\end{equation}
 Two obstacles were simulated to approach the fleet, represented by indices $6$ and $7$, and we used the safety radii $d_{i6\min}= 15$~m and $d_{i7\min} = 12$~m for all agents $i = 1,\dots,5$. The optimization \eqref{eq:QP} was solved with $G = [G'^\top,G''^\top]^\top$ and $\boldsymbol{b}= [\boldsymbol{b}'^\top,\boldsymbol{b}''^\top]^\top$. The vehicles were initialized away from their formation positions, as shown in Figure~\ref{fig:sim1}, creating several head-on situations between the agents. Whereas the nonhybrid formulation~\eqref{eq:h1} would in these cases result in undefined relative degree,  the proposed approach enables the vehicles to converge safely to their formation positions. Later, the vehicles deviate temporarily from their formation positions to avoid collisions with the obstacles approaching head-on and crossing. Figure \ref{fig:hmin_1} shows that the minimum value 
$h_0^{\min} \triangleq \min \left\{ \quad \min_{i,j \in\mathscr{I}_a,j\neq i} h_0, \quad \min_{i\in\mathscr{I}_a,j \in \mathscr{I}_o}h_0 \right\}$
is consistently nonnegative, verifying that safety is preserved throughout the simulation, and the fleet's barycenter deviates minimally from the desired path. Figures~\ref{fig:r_1}-\ref{fig:u_1} verify that the agents' speeds and turning rates remain within the specified bounds.

\begin{figure}[t]
    \begin{minipage}{1\linewidth}
    \subfloat[]{
    \includegraphics[width=1\linewidth, trim = 0cm 0.4cm 0cm 1.5cm, clip]{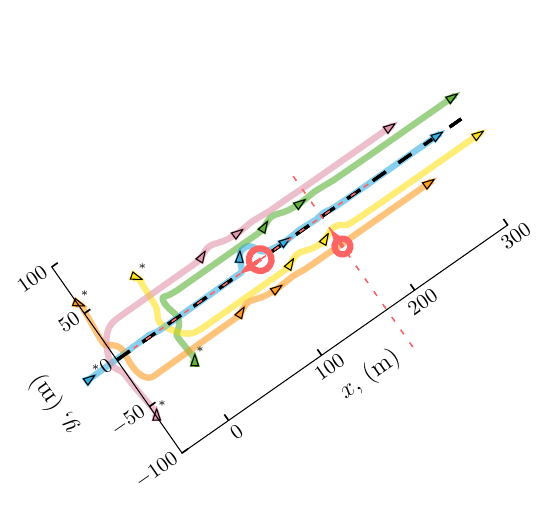}\label{fig:sim1}}
    \end{minipage}
        \begin{minipage}{1\linewidth}
            \subfloat[]{\includegraphics[width=1\linewidth]{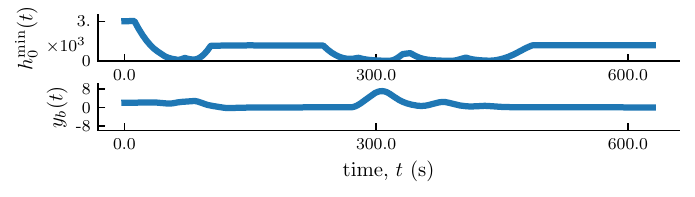}\label{fig:hmin_1}}\\
            \subfloat[]{\includegraphics[width=0.48\linewidth]{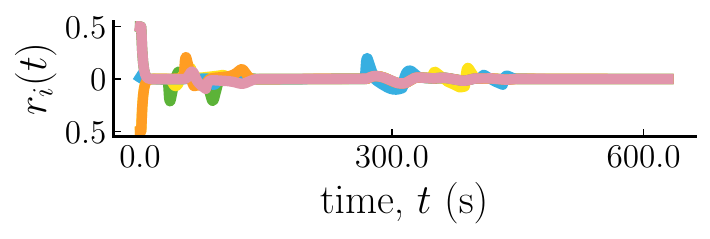}\label{fig:r_1}}%
             \subfloat[]{\includegraphics[width=0.48\linewidth]{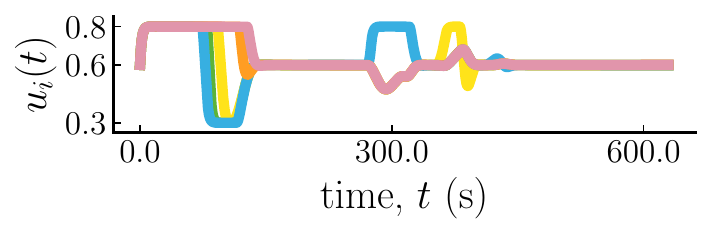}\label{fig:u_1}}
                 \caption{Results from the first simulation. The planar trajectories are shown in Figure~(a), where the vehicles are illustrated by triangular shapes and the obstacles by circular shapes. The desired path is the black, dashed line.  Figure~(b) shows the minimum $h^{\min}_0$ and the cross-track error $y_b$, and Figures~(c)-(d) show the turning rates~$r_i$ and speeds $u_i$ of all agents.}\label{fig:sim1_states}
        \end{minipage}
    
    \end{figure}
\begin{figure}[t]
    \begin{minipage}{1\linewidth}
    \subfloat[]{
    \includegraphics[width=1\linewidth, trim = 0cm 0.2cm 0cm 1.5cm, clip]{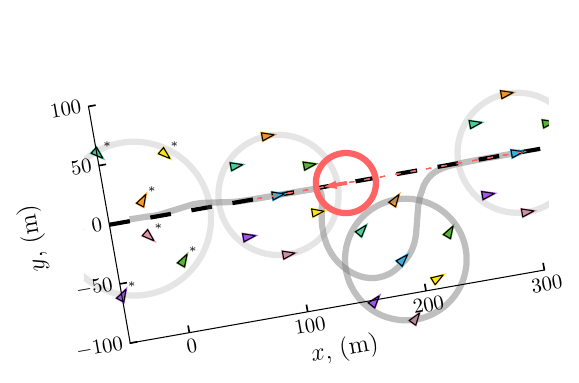}\label{fig:sim2}}
    \end{minipage}
        \begin{minipage}{1\linewidth}
            \subfloat[]{\includegraphics[width=1\linewidth]{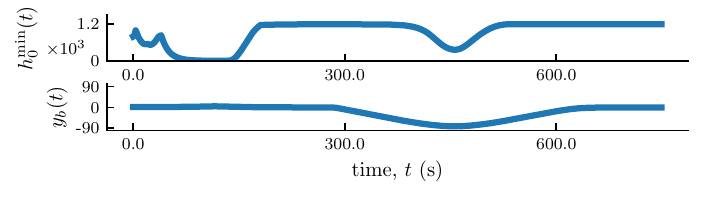}\label{fig:hmin_2}}        \\
            \subfloat[]{\includegraphics[width=0.48\linewidth]{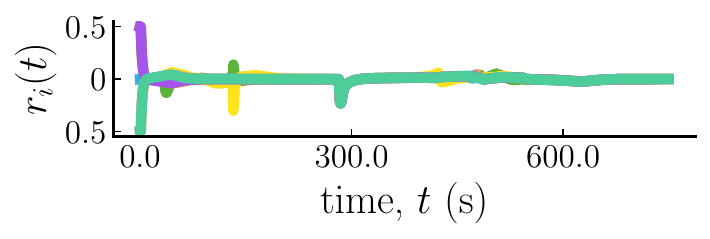}\label{fig:r_2}}
             \subfloat[]{\includegraphics[width=0.48\linewidth]{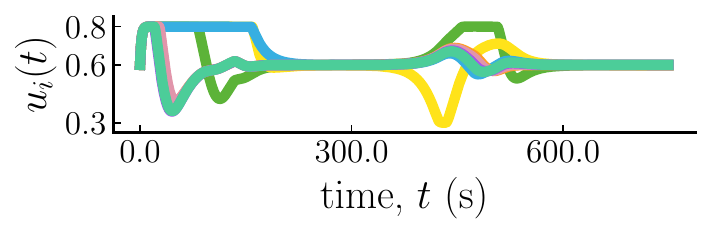}\label{fig:u_2}}
               \caption{Results from the second simulation. In Figure~(a), the gray, solid line and circles indicate the trajectory of the barycenter $\boldsymbol{p}_b$ and the formation radii, respectively. The least transparent circle shows the position of the vehicles relative to the obstacle in red.} \label{fig:sim2_states}
        \end{minipage}
    
    \end{figure}

In the second example, we consider a fleet of $n_a = 7$ vehicles, and the desired shape of the formation  was set to
\begin{equation}
\begin{split}
    \!\!\! \begin{bmatrix}
     \boldsymbol{p}^f_1, .., \boldsymbol{p}^f_7
    \end{bmatrix} \! = \! \begin{bmatrix}
        30 & -30 & 50 & 0 &-50 &-30  & 30 \\
        -20 & 30 & -50 & 0 & 50 & -30  & 20
    \end{bmatrix}
\end{split}
\end{equation}
The optimization \eqref{eq:QP} was in this case solved with $G = [G'''^\top,G''^\top]^\top$ and $\boldsymbol{b} = [\boldsymbol{b}'''^\top,\boldsymbol{b}''^\top]^\top$. We simulated a single obstacle, represented by index 8, with the safety radii $d_{i8\min} = 30$~m for all agents $i = 1,\dots,7$. The vehicles can be seen to converge to the desired formation in Figure~\ref{fig:sim2}, after which they encounter the obstacle head-on. As remarked before, this situation would have resulted in undefined relative degree if a nonhybrid formulation of the form \eqref{eq:h1} was deployed. Under the proposed strategy, however, the vehicles jointly leave the desired path, avoiding the collision whilst preserving the formation. Figures~\ref{fig:hmin_2}-\ref{fig:u_2} show that both the input/state bounds and the objectives~\eqref{eq:obj_obscolav}-\eqref{eq:obj_ivcolav} remain satisfied. Naturally, the barycenter of the formation deviates significantly more from the desired path in this scenario compared to the first scenario, which is expected given that formation-keeping was explicitly enforced via the barycenter-based constraint \eqref{eq:Gof}.

\section{Conclusions and Future Work}\label{sec:conclusion}
This paper presented a hybrid safety-critical control framework for coordinating multiple nonholonomic agents in dynamic environments, based on synergistic control barrier functions (SCBFs).  The paper addresses a key limitation of existing CBFs, which lose validity when agents are constrained to maintain strictly positive speeds, a common requirement in systems such as fixed-wing UAVs and marine vehicles.  Our contribution lies in a hybrid formulation of the CBF, involving a discrete variable that is updated such that the points where the CBF becomes invalid are avoided.  We proved the validity of the proposed SCBFs, covering both inter-agent and agent-obstacle collisions, and we analyzed the feasibility of the resulting constraints with respect to actuation limits. 

The proposed controller offers a minimally restrictive, modular safety layer than can complement a wide range of coordination schemes.  The method was integrated with a dual formation-keeping and path-tracking task, and a collaborative formation-level safety constraint based on the barycenter of the formation was proposed. We demonstrated the efficacy of the solution in simulations involving multiple nonholonomic agents and dynamic obstacles, confirming that the safety constraints remained feasible.

%This paper considered the problem of guaranteeing safety for systems comprising multiple nonholonomic agents through the notion of control barrier functions (CBFs). Whereas existing CBFs can become invalid when agents need to maintain constant or positive speeds, the proposed solution using synergistic CBFs (SCBFs) was shown to persistently maintain agent safety. We formulated the SCBFs both considering cooperating agents and  external (moving) obstacles, and the validity of the SCBFs under input constraints was analyzed. We paired the approach with a nominal formation path-following control problem and showcased the performance in simulations. 

Future work will extend the framework to systems with more complex dynamics, such as autonomous underwater vehicles (AUVs), and implement the method in real-world experiments to further validate its practical applicability.

\printbibliography

% that's all folks
\end{document}